\documentclass{article}

\usepackage[accepted]{icml2026}

\usepackage{amsmath, amsthm, amssymb}
\usepackage{amsfonts}
\usepackage{mathtools}
\usepackage{nicefrac}

\usepackage{graphicx}
\usepackage{booktabs}
\usepackage{subcaption}
\usepackage{dblfloatfix}

\usepackage{microtype}

\usepackage{algorithm}
\usepackage{algorithmic}

\usepackage{thm-restate}
\usepackage{xcolor}
\usepackage[hidelinks]{hyperref}

\usepackage{listings}
\icmltitlerunning{On Strengths and Limitations of Single-Vector Embeddings}

\newtheorem{theorem}{Theorem}[section]
\newtheorem{proposition}[theorem]{Proposition}
\newtheorem{observation}[theorem]{Observation}

\newtheorem{lemma}[theorem]{Lemma}

\newtheorem{definition}[theorem]{Definition}

\usepackage{cleveref}
\usepackage{arydshln}
\usepackage{multicol, multirow}
\usepackage{float}

\newcommand{\defeq}{\mathrel{\mathop:}=}
\newcommand{\signed}[1]{{\mathrm{sgn}}(#1)}
\newcommand{\signRank}[1]{{\mathrm{rank}}_{\pm}(#1)}

\newcommand{\relset}[2]{\text{rel}_{\text{set}}(#1, #2)}
\newcommand{\distset}[2]{\text{dist}_{\text{set}}(#1, #2)}
\begin{document}

\twocolumn[
  \icmltitle{On Strengths and Limitations of Single-Vector Embeddings}

  \begin{icmlauthorlist}
    \icmlauthor{Archish S}{msr}
    \icmlauthor{Mihir Agarwal}{msr}
    \icmlauthor{Ankit Garg}{msr}
    \icmlauthor{Neeraj Kayal}{msr}
    \icmlauthor{Kirankumar Shiragur}{msr}
  \end{icmlauthorlist}

  \icmlaffiliation{msr}{Microsoft Research India, Bengaluru, India}

  \icmlcorrespondingauthor{Mihir Agrawal}{mihiragarwal@microsoft.com}

  \icmlkeywords{embeddings, information retrieval, vector representations, sign rank}

  \vskip 0.3in

]

\printAffiliationsAndNotice{}

\begin{abstract}
Recent work \cite{weller2025} introduced a naturalistic dataset called LIMIT and showed empirically that a wide range of popular single-vector embedding models suffer substantial drops in retrieval quality, raising concerns about the reliability of single-vector embeddings for retrieval. Although \cite{weller2025} proposed limited dimensionality as the main factor contributing to this, we show that \textit{dimensionality alone cannot explain the observed failures}. We observe that by results in \cite{amy16}, $2k+1$-dimensional vector embeddings suffice for top-$k$ retrieval. This result points to other drivers of poor performance. Controlling for tokenization artifacts and linguistic similarity between attributes yields only modest gains. In contrast, we find that domain shift and misalignment between embedding similarities and the task’s underlying notion of relevance are major contributors; finetuning mitigates these effects and can improve recall substantially.

Even with finetuning, however, single-vector models remain markedly weaker than multivector representations, pointing to fundamental limitations. Moreover, finetuning single-vector models on LIMIT-like datasets leads to \textit{catastrophic forgetting} (performance on MSMARCO drops by more than 40\%), whereas forgetting for multi-vector models is bare minimum. 

To better understand the gap between performance of single-vector and multi-vector models, we study the drowning in documents paradox \cite{reimers-gurevych-2021-curse, drowning2025}: as the corpus grows, relevant documents are increasingly “drowned out” because embedding similarities behave, in part, like noisy statistical proxies for relevance. Through experiments and mathematical calculations on toy mathematical models, we illustrate why single-vector models are more susceptible to drowning effects compared to multi-vector models.

\end{abstract}

\section{Introduction}\label{sec:intro}

What are the most relevant documents\footnote{The object corpus can sometimes consist of images or ads rather than text documents, in which case the goal is to retrieve the most relevant images or ads, respectively.} for a given query? This question lies at the heart of information retrieval. Over the past decade, a family of methods known as dense retrieval have become prominent. Dense retrieval learns neural models that map both queries and documents to vector embeddings in a shared $d$-dimensional Euclidean space. Similarity - typically measured via cosine similarity\footnote{When vectors are normalized, cosine similarity is equivalent to Euclidean distance up to a monotone transformation. Some communities prefer cosine similarity for simplicity, while others (e.g. vector indexing) favor Euclidean distance for its geometric intuition and connections to various algorithmic techniques.} - serves as a proxy for relevance and enables retrieval by nearest-neighbor search. Dense retrieval has gained traction because embeddings can capture aspects of semantic meaning beyond lexical overlap. It has achieved strong performance in applications such as web search\footnote{Vector embeddings are also widely used in image search and ad selection.}, especially when combined with complementary retrieval signals and efficient vector indexes. This success has fueled extensive work on embedding architectures, training objectives and procedures, and scalable indexing methods. Despite this progress, our conceptual understanding of the capabilities and limitations of single-vector embeddings remains incomplete. Recent work \cite{weller2025} introduced LIMIT, a naturalistic but synthetically generated dataset, and found that single-vector embeddings produced by existing state of the art neural models perform strikingly poorly on it. Given the widespread adoption and strong real-world results of single-vector embeddings, this failure mode has drawn significant attention. Why do vector embeddings break down so severely on LIMIT despite performing well in many practical settings? In this work, we investigate several candidate explanations for LIMIT’s difficulty, including one proposed by \cite{weller2025}.\\

\noindent {\bf LIMIT dataset.} The LIMIT dataset introduced in \cite{weller2025} creates a simple, real-world natural language task with $q=1000$ queries and $n=50,000$ documents. Documents correspond to profiles of people with a list of attributes they like, e.g. ``Jon Durben likes Quokkas and Apples''. In this example, the list of attributes liked by a person Jon has length $\ell = 2$ but in the actual dataset we have $\ell = 48$ for all persons. Queries are simple questions asking who likes a specific attribute, e.g. ``Who likes Quokkas?''. Document $D$ is relevant to a query $Q$ if the attribute in $Q$ is contained in $D$\footnote{The queries and documents in the LIMIT dataset are essentially sets over a particular universe of attributes $U$ and moreover each query $Q$ has length $\ell_Q = 1$ and each document $D$ has length $\ell_D = 48$. For understanding the underlying phenomenon we will study datasets where with different attribute universe $U$ and differing values of $\ell_Q$ and $\ell_D$ but the underlying characteristic is like the LIMIT dataset. Accordingly, 
we will say that a dataset is a {\em LIMIT-like dataset} if all queries and documents correspond to sets (possibly of differing lengths and over a different set of attributes) and relevance of a document $D$ to query $Q$ is determined by $\text{rel}_{\text{set}}(Q, D)$ being above a certain threshold (possibly query-dependent).}. In this case there is a natural ground truth measure of relevance: for query set $Q$ and a document set $D$ the underlying combinatorial relevance is the fraction of attributes in $Q$ that are present in $D$\footnote{
    The corresponding notion of distance is the fraction of elements of $Q$ that are absent in $D$, i.e. $\frac{|Q \setminus D|}{|Q|}$.
}, i.e.
    \begin{equation}\label{eqn:setDistance}
       \text{rel}_{\text{set}}(Q, D) := \frac{|Q \cap D|}{|Q|}.   
    \end{equation}
% Most of our experiments are done using the QWEN model which generates embeddings of dimension $d_{\QWEN} = 1024$. We will denote the QWEN-embedding of a query $Q$ (resp. document $D$) by $\QWEN(Q) \in \mathbb{R}^{d_{\QWEN}}$ (resp. $\QWEN(D)$). Every embedding model such as QWEN corresponds to a canonical retriever : given query $Q$ it orders all the documents by the cosine similarity $\mathrm{sim}_{\cos}(\QWEN(Q),\QWEN(D)) = \langle \QWEN(Q), \QWEN(D)\rangle\|/\QWEN(Q)\|_2\,\|\QWEN(D)\|_2
% $ and returns the $k$ documents with the highest values of this similarity. We will refer to such a retriever as the $k-\QWEN$ retriever.
% LIMIT-like datasets do not have synonymous or polysemous attributes and hence  allows us to precisely define the set of totally irrelevant documents for a query $Q$ denoted $\IrrDocs{Q}$ in the natural way: those documents which have zero intersection (as sets) with $Q$, i.e.
%     $$ \IrrDocs{Q} \defeq \{ D_{-} : |Q \cap D_{-}| = 0 \}. $$

As discussed earlier in the introduction, beyond introducing the dataset, the authors of \cite{weller2025} observed that embeddings produced by state-of-the-art neural models perform strikingly poorly on this dataset. To explain this behavior, their theoretical work establishes a connection between the dimensionality required to capture relevance information and an analytic quantity known as the sign-rank of a matrix formed out of the relevance data (see Definition~\ref{defn:signRank}). In the worst case, relevance matrices—which may not arise in practical settings—can have high sign-rank. Through this connection, high sign-rank implies that embeddings must be high-dimensional for cosine similarity to capture the relevance matrix. Building on this insight, the authors suggest that the relatively low dimensionality of embeddings used in practice may explain the poor performance of popular neural models on the LIMIT dataset, hinting that the LIMIT dataset itself may have high sign-rank.

\subsection{Our contributions}
In our work, we develop a deeper understanding of the limitations of single-vector embeddings and study how various parameters influence their performance. We further contrast these findings with multi-vector embeddings, arguing that multi-vector representations outperform single-vector embeddings across multiple aspects. As a first step, we focus on understanding the role of dimensionality for single-vector embeddings.

\paragraph{Role of dimensionality} In our first result, we show that when the relevance matrix is $k$-sparse—that is, for each query there are at most $k$ relevant documents—the sign-rank of the matrix is upper bounded. In particular, results from \cite{amy16} imply that embeddings of dimension at most $2k+1$ suffice to represent such a relevance matrix. Since the LIMIT dataset has sparsity of $k=2$, this implies that embeddings of dimension $5$ are sufficient for this dataset. Consequently, this shows that the explanation proposed by \cite{weller2025}, namely that poor performance is due to inherently high-dimensional requirements, does not apply to the LIMIT dataset that they introduce. We further note that even when the total number of relevant documents is large, if the retrieval objective is to ensure that top $k$ retrieved documents are relevant, a dimensionality of $2k+1$ again suffices. We note that these upper bounds apply only to the problem of capturing the relevance matrix over a fixed, known set of queries and documents\footnote{Note that in practice, we need to learn models which can generate embeddings for new queries and documents. This is not captured in the mathematical model studied in \cite{weller2025} and we show that in their mathematical model, low dimensions suffice in various settings.}, and is most meaningful when $k$ is small. 

Additionally, we show that the classical Johnson–Lindenstrauss lemma \cite{Johnson1984} yields an upper bound of $O( \min ( \ell_Q^2 \cdot \ell_D^2 \cdot \log |U|; |U|))$ on the dimensionality required for a simple, canonical embedding model for LIMIT-like datasets, where $\ell_Q$ and $\ell_D$ denote the lengths of queries and documents, respectively. Details appear in Section~\ref{sec:signRankUpperBounds}.

In summary, the main takeaway is the following:

\begin{center}
\emph{Low-dimensional single-vector embeddings for LIMIT-like datasets exist whenever at least one of the following quantities is small: the sparsity of the relevance matrix, the number of the documents to be retrieved, the length of queries and documents, or the size of the attribute universe.}
\end{center}

Following this result, in our second set of contributions, we shift focus to complementary factors and study the role of tokenization, linguistic structure, domain shift, and the drowning phenomenon, in the context of LIMIT-like datasets, on the performance of both single- and multi-vector embeddings. Multi-vector embeddings, together with the associated Chamfer distance \cite{khattab2020colbert, santhanam2022colbertv2}, provide richer representations, and our experiments demonstrate their superiority over single-vector embeddings across these aspects.

\paragraph{Role of tokenizer and linguistic aspects of attributes.} 
%The LIMIT dataset was primarily meant to be natural. The attributes chosen were natural but it made it difficult for modern embedding models as the attributes get tokenized into multiple tokens rather than a single token leading to different attributes having an overlap of tokens. Also some pairs of attributes such as {\em iced tea} and {\em black tea} are linguistically similar and so a query for one attribute in the pair can lead the retriever to fetch documents for the other. To understand their role, we control these aspects by choosing a set of attributes that are single token nouns and avoiding similar pairs. This improves the quality of retrieval but not substantially, e.g. $\mathrm{Recall}@10$ improves from about $1\%$ to about $3\%$. \kk{Here we should also mention how multivector suddenly starts doing better.} See appendix section \ref{sec:allExpts} for further details. \\

The LIMIT dataset was primarily designed to be natural. While the chosen attributes are natural, this choice poses challenges for modern embedding models: many attributes are tokenized into multiple tokens rather than a single token, leading to token overlap across different attributes. Moreover, certain attribute pairs - such as {\em iced tea} and {\em black tea} - are linguistically similar, so a query for one attribute can cause the retriever to fetch documents corresponding to the other.

To isolate the impact of these factors, we control for tokenization and linguistic similarity by selecting attributes that are single-token nouns and by avoiding semantically similar attribute pairs. This change yields only modest improvements for single-vector embeddings; for example, $\mathrm{Recall}@10$ increases from approximately $1\%$ to $3\%$. In contrast, multi-vector embeddings exhibit substantial gains: $\mathrm{Recall}@2$ improves from $27\%$ on the LIMIT dataset to $96\%$ in the single-token attribute setting, highlighting the superiority of the multi-vector approach. To further test this, we track how the angles between embeddings of attributes change when we train single-vector and multi-vector models on LIMIT-like datasets. The changes of angles in the single-vector setting is erratic, with no clear pattern. Whereas for multi-vector models, there is a clear pattern with the angles increasing somewhere in the range of $35^{\circ}$ to $45^{\circ}$. See ~\Cref{sec:empirical} for more details. In summary, the main takeaway here is the following:

\begin{center}
    \emph{Tokenization and linguistic similarity impact performance of embedding models on LIMIT-like datasets; controlling for these factors leads to modest gains for single-vector embeddings, while multi-vector embeddings show greater improvements.}
\end{center}

\paragraph{Role of Domain shift and distance function misalignment.} 
% The models used for generating single vector embeddings in \cite{weller2025} are basically trained in an unsupervised way for the task of next token prediction and then finetuned for the task of retrieval on natural documents.
The documents in the LIMIT dataset are a fairly long (of length $\ell = 48$)  list of attributes / items. While the queries and documents in LIMIT are perfectly valid and even natural  English sentences yet it seems rather rare for documents in real-world settings to be a long list of (unrelated) items. It is reasonable to expect that current embedding models might not have seen such examples during its training phase. Hence the LIMIT dataset represents some sort of a {\em domain shift} with input set having the underlying combinatorial structure of a set system\footnote{
    The ability of machine learning models to handle such domain shifts is known as {\em out of distribution generalization} ability.
} that might be rather rare in real datasets. The empirical results of \cite{weller2025} show that current embedding models are unable to handle this kind of domain shift accompanied with a combinatorial relevance measure that could perhaps be poorly aligned with the relevance for naturally occurring queries and documents. It is therefore natural to try to finetune existing models to see if the finetuning enables them to handle this domain shift and makes the cosine similarity between embeddings better aligned with the combinatorial relevance given by (Equation \ref{eqn:setDistance}). We finetune existing single vector embedding models\footnote{
As usual, for the purposes of finetuning the LIMIT data is randomly split into training and test sets and finetuning of model is done on the training set while evaluation is done on the test set.
} and observe that finetuning significantly improves the retrieval quality. For example, $\mathrm{Recall}@10$ improves from about $1\%$ to about $40\%$ for the single-vector embeddings. At the same time, for multi-vector embeddings $\mathrm{Recall}@10$ improves from $40\%$ to $98\%$ with finetuning. While finetuning improves recall on LIMIT-like datasets, it also induces catastrophic forgetting—the phenomenon where a model loses performance on previously learned tasks when adapted to a new domain. In particular, we observe that finetuning single-vector embedding models on LIMIT-like datasets leads to substantial forgetting on MS MARCO, with $\mathrm{Recall}@100$ dropping by more than $40\%$. In contrast, multi-vector embedding models are significantly more robust, experiencing only about a $1\%$ drop. Thus, even with finetuning, single-vector models remain weaker than multi-vector representations, both in terms of achievable performance gains and robustness to catastrophic forgetting, pointing to fundamental limitations of single-vector embedding models. We also perform finetuning and evaluation on several other LIMIT-like datasets; see \Cref{sec:empirical} for further details. In summary, the main takeway is the following:
\begin{center}
    \emph{The poor performance of embedding models on LIMIT stems from a domain shift. Finetuning substantially improves retrieval, but even after finetuning, single-vector models lag far behind multi-vector embeddings, which achieve significant performance gains and exhibit greater robustness to catastrophic forgetting.}
\end{center}

\paragraph{Drowning in documents paradox \cite{reimers-gurevych-2021-curse, drowning2025}.} Drowning in documents is a well observed phenomenon for many neural models where the quality of retrieval decreases as we increase the size of the corpus - even if the added documents are mostly irrelevant. We do a experimental and theoretical comparison of single-vector and multi-vector models with respect to drowning on LIMIT-like datasets. In terms of theoretical results, in the special setting where the query has length $1$ and each document has length $n$, we show that the probability that a randomly sampled negative document attains a higher similarity score than a relevant document is of the order of $\exp\left(-\Theta(d/n)\right)$ for a toy mathematical model of single-vector embeddings, while this probability is much smaller for multi-vector embeddings, $\exp\left(-\Theta(d/\log(n))\right)$. We corroborate this findings with experiments as well. See \Cref{sec:empirical,sec:goodness_single,sec:goodness_multi} and \Cref{fig:drowning_probability} for details. Furthermore, we perform a preliminary experiment to validate this prediction on real datasets by merging a chunk of documents from MS MARCO passages dataset (of size 1M) into the SciFact dataset, and observe that $\mathrm{Recall}@10$ drops by roughly $20\%$ for single-vector embeddings. In summary, the main takeway is the following:

\begin{center}
    \emph{Single-vector models are more susceptible to drowning effects than multi-vector models. The reason is that the scores of relevant and irrelavant documents are much better separated in multi-vector models compared to single-vector models.}
\end{center}

\subsection{Related Work}
A long line of work studies the expressiveness and empirical performance of dense
retrieval models based on bi-encoders that map queries and documents to
single-vector embeddings scored by cosine similarity.
Recently, Dense Passage Retrieval (DPR)~\cite{karpukhin2020dense} helped catalyze
widespread adoption of this paradigm, and subsequent training innovations such as
ANN-mined hard negatives~\cite{xiong2021ance} and cross-encoder distillation /
joint retriever--reranker training~\cite{qu-etal-2021-rocketqa,ren-etal-2021-rocketqav2} further
improved in-domain effectiveness. At the same time, heterogeneous benchmarks
such as BEIR~\cite{thakur2021beir} highlighted brittleness under distribution
shift, motivating efforts aimed at stronger zero-shot generalization such as
unsupervised contrastive pretraining (e.g., Contriever)~\cite{izacard2022contriever, lei2023relevance}
and large-scale dual encoders (e.g., GTR)~\cite{ni-etal-2022-large}.

Despite these advances, multiple analyses show that single-vector dense retrievers
can fail in systematic ways: performance may degrade due to tokenization and
lexical sensitivity artifacts~\cite{ram-etal-2023-token, sciavolino-etal-2021-simple}, domain mismatch~\cite{thakur2021beir}, 
fine-grained queries or ones with logical operations~\cite{xu-etal-2025-dense, malaviya2023quest}, multi-intent queries/diverse targets \cite{chen2025beyond} and
geometric/representational pathologies (e.g., anisotropy and related degeneracies)
that constrain what cosine similarity can express~\cite{ethayarajh-2019-contextual,li-etal-2020-sentence,gao2021simcse}.
More fundamentally, recent work formalizes an inherent “single-vector bottleneck”
for embedding-based retrieval where relevance
is set-structured/combinatorial in a way that cannot be captured by any small-dimensional
embedding, even with powerful encoders and contrastive training~\cite{weller2025}.
They also construct a dataset (LIMIT) on which all the popular dense retrieval models perform poorly.

An alternative paradigm - multi-vector or late-interaction retrieval - addresses these
expressiveness limits by representing each text as a set of token-level vectors and
scoring via a set-matching operator such as Chamfer/MaxSim.
ColBERT~\cite{khattab2020colbert} and ColBERTv2~\cite{santhanam2022colbertv2}
instantiate this approach and substantially improve ranking quality over single-vector
methods on many benchmarks. While late interaction increases indexing and scoring cost,
systems work has narrowed the practicality gap via specialized execution and compression
(e.g., PLAID)~\cite{santhanam2022plaid} and reductions of multi-vector search to
single-vector ANN primitives (e.g., MUVERA)~\cite{jayaram2024muvera}.
These developments complement our findings: multi-vector representations paired with
Chamfer-style scoring naturally encode the set-structured similarity underlying LIMIT-like
tasks better than single-vector models.
\\
\\

\section{Upper Bounds on Dimensionality }\label{sec:signRankUpperBounds}
\noindent {\bf Notation.} Given a set of $q$ queries and $n$ documents, the ground-truth relevance matrix $R \in \{0, 1\}^{q \times n}$ captures the relevance information in the natural way: $R_{ij}$ is $1$ if document $j$ is relevant to query $i$ and $0$ otherwise. \\

\noindent {\bf Sign rank.} Let us recall the notion of \emph{sign rank} \cite{amy16, weller2025}. It will enable us to reason about the dimensionality of embeddings needed to represent a relevance matrix $R$. 

\begin{definition}[Signed Relevance Matrix and Sign Rank]\label{defn:signRank}
    For any real-valued matrix $M \in \mathbb{R}^{q \times n} $, the signed matrix for $M$ -  denoted $\signed{M}$ - is defined as: 
    \begin{equation}
        \signed{M}_{ij} = 
        \begin{cases} 
            -1 & \text{if } M_{ij} \leq 0, \\
            +1 & \text{if } M_{ij} > 0.
        \end{cases}        
    \end{equation}
    For a Binary relevance matrix $R \in \{0, 1\}^{q \times n}$ we have,
        $$ \signed{R} \defeq (2 \cdot R - \mathbf{1}_{q \times n}) $$
    where $\mathbf{1}_{q \times n}$ denotes the $q \times n$ matrix all of whose entries are $1$. The sign rank of a matrix $R$ - denoted $\signRank{\signed{R}}$ - is defined as
    \begin{align*}
        \signRank{\signed{R}}
        &= \min_{M : \mathrm{shape}(M) = \mathrm{shape}(R)} \\
        &\quad \big\{ \mathrm{rank}(M) : \mathrm{sign}(M) = \signed{R} \big\}.
    \end{align*}
\end{definition}

\noindent \cite{weller2025} observe / highlight that $\signRank{R}$ almost exactly captures the required dimension for the embeddings of the underlying query and document objects\footnote{
    Of course sign rank is a well studied notion with many connections to machine learning, but the observation that it can be used to capture dimension required for information retrieval seems not been observed before.
}. Specifically: 

\begin{proposition} (Proposition 2 of \cite{weller2025}).
    Given a binary relevance matrix $R$, the dimension of embeddings required to capture the relationships in $R$ is at least $\signRank{\signed{R}} - 1$ and at most $\signRank{\signed{R}} $.
\end{proposition}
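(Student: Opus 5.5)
We first fix what it means for embeddings to \emph{capture} $R$, since the statement depends on it. Following \cite{weller2025}, we say vectors $u_1,\dots,u_q,v_1,\dots,v_n\in\mathbb{R}^d$ capture $R$ if for every query $i$ there is a threshold $\tau_i$ with $\langle u_i,v_j\rangle>\tau_i$ exactly when $R_{ij}=1$. Equivalently, the score matrix $M=UV^\top$ has a row-wise threshold that reproduces $R$. Let $r=\signRank{\signed{R}}$ and let $d^\ast$ be the smallest $d$ for which capturing embeddings exist. The plan is to prove $d^\ast\le r$ and $d^\ast\ge r-1$ separately, by converting between sign patterns and thresholded inner products.

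\textbf{Upper bound $d^\ast\le r$.} Take $M$ with $\mathrm{rank}(M)=r$ and $\mathrm{sign}(M)=\signed{R}$. Factor it as $M=UV^\top$ with $U\in\mathbb{R}^{q\times r}$ and $V\in\mathbb{R}^{n\times r}$, for instance via the SVD. Use the rows of $U$ as query embeddings and the rows of $V$ as document embeddings. Then $\langle u_i,v_j\rangle>0$ iff $M_{ij}>0$ iff $R_{ij}=1$, so the threshold $\tau_i=0$ works for every query. This gives dimension $r$.

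\textbf{Lower bound $d^\ast\ge r-1$.} Start from capturing embeddings in dimension $d$ with thresholds $\tau_i$. Since $R$ is finite, we may perturb each $\tau_i$ slightly so that no score equals its threshold exactly. Define $\tilde u_i=(u_i,-\tau_i)\in\mathbb{R}^{d+1}$ and $\tilde v_j=(v_j,1)$. Then $\langle\tilde u_i,\tilde v_j\rangle=\langle u_i,v_j\rangle-\tau_i$, which is positive iff $R_{ij}=1$ and strictly negative otherwise. So the matrix $\tilde M=\tilde U\tilde V^\top$ satisfies $\mathrm{sign}(\tilde M)=\signed{R}$, which matches the convention that entries $\le 0$ map to $-1$. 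It has rank at most $d+1$, so $r\le d+1$, that is, $d\ge r-1$.

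\textbf{Main obstacle.} The only delicate point is the definition of capturing. If it means a global threshold or a purely ordinal notion (every relevant document scores above every irrelevant one for that query), the same argument applies, because the ordinal notion yields a per-row threshold strictly between the two groups. The $-1$ slack comes exactly from the extra bias coordinate that absorbs the per-query threshold. If thresholds were required to be $0$, the two quantities would coincide. I would state the per-query-threshold definition explicitly before giving the two constructions. The remaining care is the zero-entry convention in $\signed{\cdot}$, which the strict inequalities obtained from perturbing thresholds handle.
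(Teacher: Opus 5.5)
Your proof is correct, and it is the standard argument: the paper gives no proof of this proposition but imports it from \cite{weller2025}. There, the upper bound comes from factoring a sign-rank witness $M=UV^\top$ and using threshold $0$, and the lower bound from appending the bias coordinate, $(u_i,-\tau_i)$ and $(v_j,1)$, to absorb the per-query thresholds, exactly as you do. One small remark: since the paper's sign convention sends $0$ to $-1$, your threshold perturbation is unnecessary; if you do perturb, $\tau_i$ must be moved upward, by less than the gap to the smallest relevant score, so that no pair is misclassified.
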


% In essence, it suffices to find a matrix $B$ that preserves the row-wise order in $A$. 
% The objective is to represent the relevance matrix $A$ as $B = UV^{\top}$, 
% where the rows of $U$ and $V$ represent the query and document vectors, respectively, 
% such that the rank of $B$ is minimized.

% \begin{proposition}
%     For a binary matrix $A \in \{0,1\}^{m \times n}$, the matrix $B$ \emph{preserves} $A$ if and only if, for every row $i$ there exists a threshold $\tau_i$ such that whenever $A_{ij} > A_{ik}$, it follows that $B_{ij} > \tau_i > B_{ik}$.
% \end{proposition}

% It directly follows that there exists arbitrary binary relevance matrices which cannot be captured via $d$ dimensions for any fixed $d$, which implies $d = \Omega( \min(n, m))$ is required to exactly capture $A$. However, if we relax the setting to approximately capture $A$ upto some distortion $\epsilon$, we have the following result.

% \kk{Describe LIMIT dataset}

% \kk{Describe critical $n$ expirement}

% \kk{Describe domain shift}

% \section{Our results}

% \subsection{Upper bound on the dimension of Single vector embeddings}

\noindent In the LIMIT dataset each query has $k=2$ relevant documents. This means that the corresponding Binary relevance matrix $R$ is {\em $k=2$ row sparse}, i.e. each row in $R$ has at most $k=2$ nonzero entries. We observe that for such relevance matrices, a result of \cite{amy16} can be immediately used to bound $\signRank{\signed{R}}$. Specifically, we have:

\begin{observation}\label{obs:dimBoundkLIMIT}
    For the LIMIT dataset, there exist embeddings of dimension $d = 2 \cdot k + 1 = 5$ that can capture the query-document relevance information.      
\end{observation}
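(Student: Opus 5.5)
The plan is to bound $\signRank{\signed{R}}$ for the LIMIT relevance matrix $R$ and then apply Proposition 2 of \cite{weller2025}: embeddings of dimension $\signRank{\signed{R}}$ suffice. It is therefore enough to show that a $k$-row-sparse binary matrix has sign rank at most $2k+1$; with $k=2$ this gives dimension $5$. This is the sparse-matrix bound of \cite{amy16}. I would also give a direct construction so the argument is self-contained.

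\textbf{Construction.} Place the documents on the moment curve. Pick distinct reals $t_1,\dots,t_n$ and assign document $j$ the vector
\[
v_j = (1, t_j, t_j^2, \dots, t_j^{2k}) \in \mathbb{R}^{2k+1}.
\]
For query $i$, let $S_i$ be its set of at most $k$ relevant documents. Define the polynomial
\[
p_i(t) = \varepsilon - \prod_{j \in S_i} (t - t_j)^2 ,
\]
with $\varepsilon>0$ chosen below. It has degree at most $2k$, so let $u_i \in \mathbb{R}^{2k+1}$ be its coefficient vector. Then $\langle u_i, v_j\rangle = p_i(t_j)$.

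\textbf{Checking signs.} For $j\in S_i$ we get $p_i(t_j)=\varepsilon>0$. For $j\notin S_i$, the product is strictly positive because the $t$'s are distinct. Choose $\varepsilon$ smaller than the minimum of these products over all pairs $(i,j)$ with $j\notin S_i$; this minimum is positive since there are finitely many pairs. Then $p_i(t_j)<0$ for every irrelevant document.

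Hence $M=UV^\top$ has rank at most $2k+1$ and $\mathrm{sign}(M)=\signed{R}$. Queries with fewer than $k$ relevant documents cause no problem, since their polynomials simply have lower degree. Taking $k=2$ for LIMIT yields embeddings in $\mathbb{R}^5$ for which inner-product thresholding at $0$ recovers $R$ exactly.

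\textbf{Remaining details.} The only delicate point is matching the notion of ``capturing relevance'' used in Proposition 2, which may be stated via cosine similarity or top-$k$ ordering rather than a sign pattern. In this construction every relevant document scores $\varepsilon$ and every irrelevant document scores strictly less than $0$, so each query's relevant documents are ranked strictly above all others. For cosine similarity, I would rescale each $v_j$ to unit norm. Rescaling $v_j$ by a positive scalar preserves the sign of $\langle u_i, v_j\rangle$. It does not preserve ranking under cosine by itself, but thresholding at $0$ still separates relevant from irrelevant documents for each query.
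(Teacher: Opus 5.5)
Your argument is correct. It shares the paper's skeleton: documents sit on the moment curve $(1,t_j,\dots,t_j^{2k})$, and each query is the coefficient vector of a polynomial of degree at most $2k$ with the right signs at the document points. What differs is the choice of polynomial.

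The paper takes $t_j=j$ and notes that a row with at most $k$ positive entries has at most $2k$ sign changes. It then uses $G_i(x)=S_{i,1}\prod_{\ell}(r_{i,\ell}-x)$, which has simple roots at the half-integers where the sign flips. You instead put a double root at each relevant document via $\prod_{j\in S_i}(t-t_j)^2$ and subtract this from a small constant $\varepsilon$, which is the classical neighborliness argument for cyclic polytopes.

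The paper's version needs no $\varepsilon$, and it proves more. A row with at most $s$ sign changes, in one common ordering of the documents, needs only degree $s$. So when relevant sets form few contiguous blocks (e.g.\ intervals), the bound does not grow with the number of relevant documents.

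Your version is ordering-free. Every relevant document gets the identical score $\varepsilon$ and every irrelevant one a strictly negative score, so the ranking conclusion is immediate. Your cosine concern is also already settled: positive rescaling of $u_i$ or $v_j$ preserves every sign, so relevant documents stay strictly above all irrelevant ones under cosine similarity. The cost is a non-explicit and possibly tiny margin, governed by the minimum of $\prod_{j'\in S_i}(t_j-t_{j'})^2$ over irrelevant pairs. You can soften this slightly by choosing $\varepsilon$ separately for each query rather than globally.
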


\noindent For completeness, we include a proof in \Cref{sec:proof-prop4}. For a retrieval task, if we only want the top-$k$ retrieved results to be relevant and do not care about the ranking of the remaining documents, then the corresponding relevance matrix effectively can be made sparse so that the above observation again applies. Hence we also have:  

\begin{observation}\label{obs:dimBoundk}
    For any dataset with $q$ queries and $n$ documents, let $R \in \{0,1\}^{q \times n}$ be the binary relevancy matrix. Then there exist embeddings of dimension $d = 2 \cdot k + 1$ such that for any query, the top-$k$ documents with the highest cosine similarity scores will be relevant.
\end{observation}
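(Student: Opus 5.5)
We reduce Observation~\ref{obs:dimBoundk} to the sparse case covered by Observation~\ref{obs:dimBoundkLIMIT}. For each query $i$, we fix an arbitrary subset $S_i$ of its relevant documents with $|S_i| = \min(k, |\{j : R_{ij}=1\}|)$. We then form the sparsified matrix $R' \in \{0,1\}^{q\times n}$ by setting $R'_{ij}=1$ iff $j \in S_i$. By construction $R'$ is $k$-row-sparse and $R' \le R$ entrywise, so every document marked relevant in $R'$ is genuinely relevant in $R$.

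Next we apply the same argument as in Observation~\ref{obs:dimBoundkLIMIT}, namely the result of \cite{amy16} that a $k$-row-sparse sign pattern has sign rank at most $2k+1$. This gives a matrix $M = UV^\top$ of rank at most $2k+1$ with $\mathrm{sign}(M) = \signed{R'}$, so $M_{ij}>0$ exactly when $j\in S_i$. We take the rows of $U$ as query embeddings and the rows of $V$ as document embeddings in dimension $d=2k+1$. If one wants the construction made explicit, the underlying idea is the moment-curve construction: map document $j$ to $v_j = (1, t_j, t_j^2, \dots, t_j^{2k})$ for distinct reals $t_j$, and map query $i$ to the coefficient vector of a degree-$2k$ polynomial $p_i$ that is positive exactly at $\{t_j : j\in S_i\}$, for instance $p_i(t) = \epsilon - \prod_{j\in S_i}(t-t_j)^2$ with $\epsilon>0$ small. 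Padding with extra factors keeps the degree at $2k$. Then $\langle u_i, v_j\rangle = p_i(t_j)$.

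Finally we translate signs into top-$k$ retrieval. For query $i$, the documents in $S_i$ have strictly positive score and all others have nonpositive score, so the top $|S_i|$ documents by inner product are exactly $S_i$, all relevant. The main obstacle is that the statement says \emph{cosine} similarity, not inner product. With cosine, the query norm is irrelevant for ranking within a row, but the document norms $\|v_j\|$ vary. Positivity is preserved under division by positive norms, so the sign pattern, and hence the split into "$S_i$ versus the rest", is unchanged. We therefore get the ranking guarantee whenever $|S_i| = k$.

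When a query has fewer than $k$ relevant documents, the top-$k$ must necessarily include irrelevant ones. We therefore interpret the claim as requiring the top-$\min(k, \#\text{relevant})$ results to be relevant, which is exactly what the construction delivers.
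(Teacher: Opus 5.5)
Your proposal is correct and follows the paper's route: sparsify $R$ to a $k$-row-sparse matrix by keeping at most $k$ relevant documents per query, then obtain a rank-$(2k+1)$ sign representation from a degree-$\le 2k$ polynomial evaluated on a Vandermonde/moment curve. The only difference is the choice of polynomial: the paper uses $S_{i,1}\prod_{\ell}(r_{i,\ell}-x)$ with half-integer roots at the $\le 2k$ sign changes of each row, while your $\epsilon-\prod_{j\in S_i}(t-t_j)^2$ is an equally valid choice, and your explicit handling of cosine versus inner product and of queries with fewer than $k$ relevant documents is more careful than the paper's inner-product-only argument.
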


\noindent Existing popular models generate embeddings of dimension which are in the hundreds if not in the thousands. This means that  dimensionality of (embeddings of) the popular models cannot explain the poor performance of these models on the LIMIT dataset. \\

% These observations on the existence of low dimensional embeddings assume that all the queries and documents are known beforehand. 

% \neeraj{Should we talk about critical $n$? If so, uncomment and edit draft below}

% The work of \cite{weller2025} also defined a quantity that they called {\em critical $n$} which is the largest number of documents $n$ (for a fixed embedding dimension $d$ and $k=2$ in the paper’s experiments) for which the free‑embedding optimization is still able to satisfy all retrieval constraints (i.e., achieve $100\%$  accuracy on the constructed top‑k ground truth). When n is increased past that value the optimizer can no longer find query/document vectors that reproduce every required top‑2 set, so the problem ``breaks''.

% \begin{corollary}
%     Let $m \geq 1$ be fixed and let $R_{n}$ be a family of $m \times n$ matrices of with increasing value of $n$. If $R_{n}$ value is $k$-row sparse then the critical $n$ value can be as large as $2^{d/k}$.    
% \end{corollary}

\noindent {\bf Low dimensional embeddings without sparsity assumption.} The above discussion pertained to existence of low-dimension embeddings {\em when the relevancy matrix is sparse} or we only care about the correctness of top-$k$ retrieved results. Even without the sparsity assumption, if we are working with LIMIT-like datasets in which queries and documents are sets of attributes, we can still generate low-dimensional embeddings.

\begin{proposition}\label{prop:dimBoundGeneral}

\end{proposition}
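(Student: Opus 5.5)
The statement is empty in the source, but the introduction tells us what it should say. For LIMIT-like datasets, where queries and documents are subsets of an attribute universe $U$ with $|Q|\le \ell_Q$ and $|D|\le \ell_D$, there are embeddings of dimension
\[
d = O\bigl(\min(\ell_Q^2 \ell_D^2 \log |U|,\ |U|)\bigr)
\]
whose inner products preserve the ordering, and hence any threshold, given by $\relset{Q}{D}$. The plan is to start from the canonical embedding and then compress it with Johnson--Lindenstrauss.

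\textbf{Step 1: the canonical embedding.} Map each attribute $a\in U$ to the standard basis vector $e_a\in\mathbb{R}^{|U|}$. Embed a query as $\phi(Q)=\frac{1}{|Q|}\sum_{a\in Q} e_a$ and a document as $\psi(D)=\sum_{b\in D} e_b$. Then
\[
\langle \phi(Q),\psi(D)\rangle = \frac{|Q\cap D|}{|Q|} = \relset{Q}{D}
\]
exactly. This already gives dimension $|U|$, and any query-dependent threshold on $\relset{Q}{D}$ becomes a threshold on the inner product.

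Two remarks on the similarity measure. If cosine similarity is required, lift $\psi(D)$ by one extra coordinate so that all document vectors have equal norm $\sqrt{\ell_D}$; the query has $0$ in that coordinate. Document norms are then equal, so for a fixed query cosine similarity is a positive rescaling of the inner product and the ranking is unchanged. If documents have varying lengths, pad each to a common norm using the extra coordinate.

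\textbf{Step 2: compression by Johnson--Lindenstrauss.} Choose a random linear map $\Pi:\mathbb{R}^{|U|}\to\mathbb{R}^d$ that preserves all pairwise inner products $\langle \Pi e_a,\Pi e_b\rangle$ for $a,b\in U$ up to additive error $\epsilon$. There are only $|U|^2$ such pairs, so $d=O(\epsilon^{-2}\log|U|)$ suffices. This makes the bound independent of the numbers of queries and documents; equivalently, one can apply JL to the unit vectors $e_a$ and $e_a\pm e_b$ and use polarization.

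Expanding the inner product gives
\[
\langle \Pi\phi(Q),\Pi\psi(D)\rangle = \frac{1}{|Q|}\sum_{a\in Q,\,b\in D}\langle\Pi e_a,\Pi e_b\rangle .
\]
This differs from $\relset{Q}{D}$ by at most $\frac{1}{|Q|}\,|Q|\,\ell_D\,\epsilon=\ell_D\epsilon$.

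\textbf{Step 3: preserving the ordering.} Distinct values of $\relset{Q}{D}$ are multiples of $1/|Q|$, so they are separated by at least $1/\ell_Q$. Taking $\epsilon<1/(2\ell_Q\ell_D)$ makes the total error smaller than half this gap, so all strict orderings and thresholds are preserved. This yields
\[
d=O(\ell_Q^2\ell_D^2\log|U|).
\]
Taking the minimum with Step 1 gives the claimed bound.

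\textbf{Main obstacle.} The delicate part is the error accounting in Step 2. One must bound the sum of $|Q|\cdot|D|$ cross-term errors, not just the error per pair, and the JL guarantee must be stated over the $|U|$ attribute vectors rather than over all query and document vectors. A lazy union bound over queries and documents would introduce $\log(q n)$ dependence instead of $\log|U|$. The norm-equalizing coordinate must also be applied after projection, or be shown to survive it, so that cosine similarity can replace the inner product.
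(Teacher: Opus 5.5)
Your proposal is correct and follows essentially the same route as the paper's proof. Both start from the trivial $|U|$-dimensional indicator embedding, then use JL-type nearly orthogonal attribute vectors with sum embeddings, leaving documents effectively unnormalized and rescaling the query by a per-query positive factor. Both bound the total cross-term error by order $\varepsilon|Q||D|$ and compare it with the $1/|Q|$ spacing of $\relset{Q}{D}$ to get $d=O(\ell_Q^2\ell_D^2\log|U|)$.

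Your padding coordinate, added after projection to equalize document norms, is a small extra. It keeps plain cosine similarity valid for variable document lengths, whereas the paper handles that case by rescaling documents by $\sqrt{|D|}$ and working with inner products.
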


\noindent We first construct a set of $|U|$ unit vectors in $d$-dimensional Euclidean  space $\mathbb{R}^d$ which are pairwise {\em almost} orthogonal. Johnson-Lindenstrauss lemma \cite{Johnson1984} implies that a random set of vectors has this property (with high probability) but deterministic constructions based on error correcting codes or set systems with small intersection (cf. \cite{DeVore2007}) can also alternately be used. We assign one vector from this set to each token in $U$. For any query (resp. document), the embedding is simply the normalized sum of the assigned vectors for each of the tokens in the query (resp. document). When $d$ is $>> \min( \ell_Q^2 \cdot \ell_D^2 \cdot \log (|U|) ,|U|)$ then for every query such embeddings induce the same ordering on documents as induced by the set-theoretic relevance function of equation (\ref{eqn:setDistance}). Proof details are in \Cref{sec:proof-prop5}. \\

% \noindent {\bf Characterization of drowning in documents for NSTEM model.}

% \begin{theorem}\label{thm:drowning} {\bf Characterization of drowning in NSTEM model (informal)}. 
%     There exist absolute constants $0 < c_{L}^{\NSTEM} < c_{U}^{\NSTEM}$ such that the following holds. Let $Q$ be a query for a LIMIT-like dataset with $n$ documents. Suppose that $Q$ has only one relevant document $D_{+}$ which has length $\ell_{D_{+}}$. Consider the $k-\NSTEM$ retriever which fetches the top-$k$ documents based on Euclidean distance between NSTEM embeddings. If 
%         $$ \frac{{|Q \cap D_{+}|}^2}{\ell_{D_{+}}} > c_U^{\NSTEM} \cdot \frac{1}{d_{\NSTEM}} \cdot \log \left( \frac{n}{k} \right), $$
%     then with {\em high probability} (over the random choice of other documents in the corpus) the output of $k-\NSTEM$ retriever contains $D_{+}$ . Conversely, if 
%         $$ \frac{{|Q \cap D_{+}|}^2}{\ell_{D_{+}}} < c_L^{\NSTEM} \cdot \frac{1}{d_{\NSTEM}} \cdot \log \left( \frac{n}{k} \right), $$
%     then with {\em high probability}  (over the random choice of other documents in the corpus) the output of $k-\NSTEM$ retriever will NOT contain $D_{+}$.      
% \end{theorem}

\section{\bf Experimental Results}\label{sec:empirical}

In this section, we share the results of our experiemnts on evaluation and training of single-vector and multi-vector models on various LIMIT-like datasets.
\\

\noindent {\bf Role of tokenizer and linguistic aspects of attributes.} In LIMIT, many attributes are split across multiple tokens, and different attributes can share the same lexical components while encoding distinct entities; for example, {\em iced tea} vs. {\em black tea}, or {\em apple pie} vs. {\em apple slices}. To test how much tokenization and linguistic aspects determine the poor performance of models on LIMIT, we create a new dataset called Atomic LIMIT. We replace the attributes with noun-based variants that already exist as single tokens in the tokenizer’s vocabulary. In addition, we try to make the attributes linguistically dissimilar as far as possible. In \Cref{tab:qwen2}, we can see that the performance of single-vector model on Atomic LIMIT is better than LIMIT but still pretty bad. At the same time the multi-vector model has almost perfect performance on Atomic LIMIT (\Cref{tab:modern-colbert-vanilla}). This suggests that the tokenization and linguistic aspects are largely responsible for poor performance of multi-vector models on LIMIT like datasets whereas several other issues plague single vector models. To test this further, we track the angles between the embeddings of various attributes as we finetune models on LIMIT like datasets.  As we can see in \Cref{tab:placeholder}, the embeddings of attributes become consistently more orthogonal after fine-tuning of multi-vector model (GTE-ModernColBERT-v1) but the behaviour for single-vector model (Qwen3-Embedding-0.6B) is erratic.
\\

\begin{figure*}[!t]
    \centering
    \includegraphics[width=\linewidth]{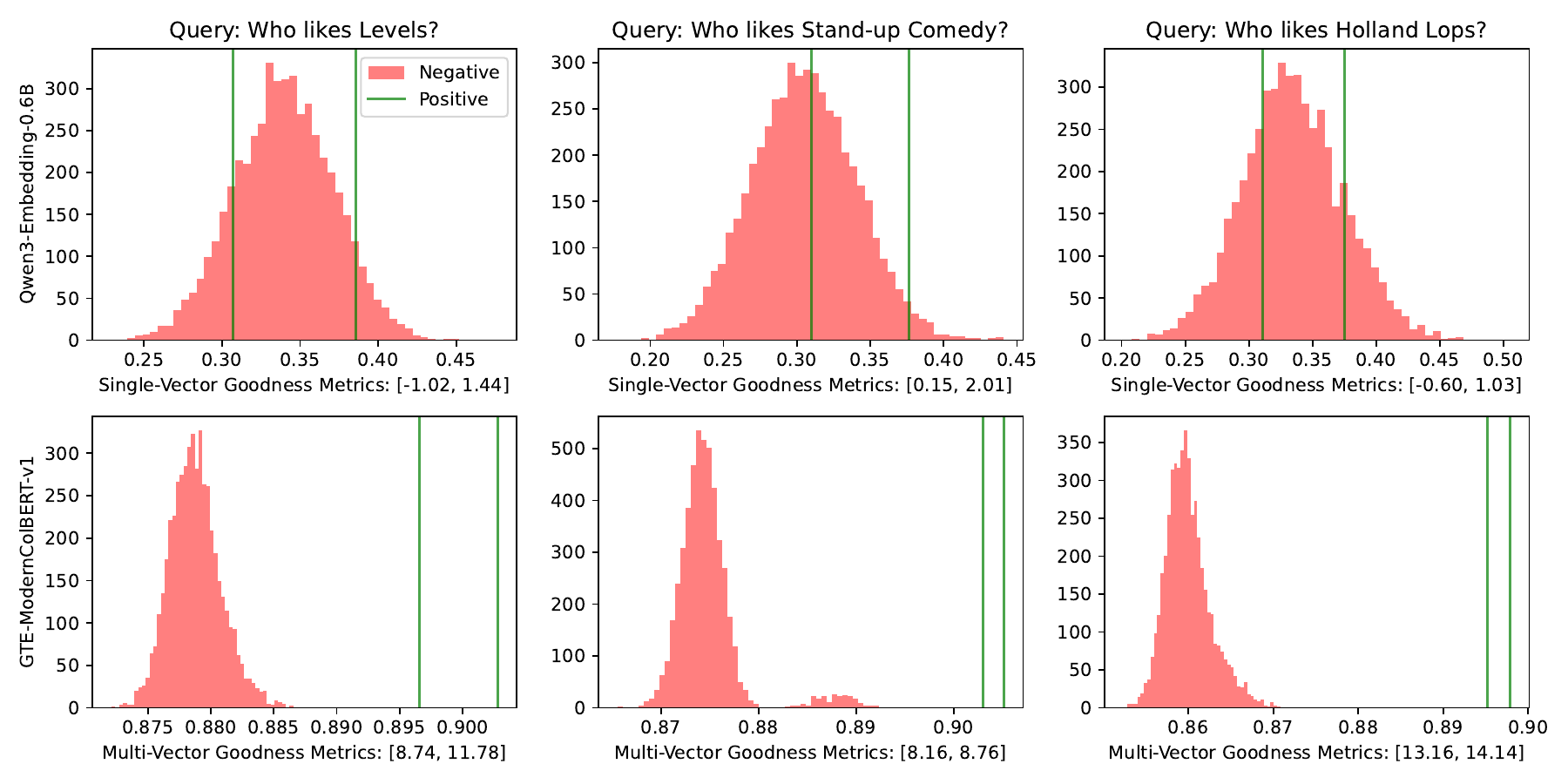}
    \caption{Similarity scores of relevant (positive) and irrelevant (negative) documents for select queries from LIMIT using Single-vector \texttt{Qwen3-Embedding-0.6B} and Multi-vector \texttt{GTE-ModernColBERT-v1} models}
    \label{fig:drowning_probability}
\end{figure*}

\begin{table*}[!t]
    \centering
    \resizebox{0.8\textwidth}{!}{%
    \begin{tabular}{llcccc}
        \toprule
        \multicolumn{1}{c}{Training Dataset} & \multicolumn{1}{c}{Testing Dataset} & Recall@2 & Recall@10 & Recall@50 & Recall@100 \\ 
        \midrule
        None\footnotemark & LIMIT & 0.0075 & 0.0115 & 0.0215 & 0.0265 \\
        None & Atomic LIMIT & 0.0215 & 0.0305 & 0.0535 & 0.071 \\
        \bottomrule
    \end{tabular}%
    }
    \caption{Single-vector evaluation using \texttt{Qwen3-Embedding-0.6B} (1024 dim)}
    \label{tab:qwen2}
\end{table*}
\footnotetext{\texttt{None} indicates the vanilla {\texttt{Qwen3-Embedding-0.6B} (1024 dim)} (no fine-tuning) used on testing dataset.}

\begin{table*}[!t]
    \centering
    \resizebox{0.8\textwidth}{!}{%
    \begin{tabular}{llcccc}
        \toprule
        \multicolumn{1}{c}{Training Dataset} & \multicolumn{1}{c}{Testing Dataset} & Recall@2 & Recall@10 & Recall@50 & Recall@100 \\ 
        \midrule
        None\footnotemark & LIMIT & 0.2740 & 0.3945 & 0.5275 & 0.5930 \\
        None & Atomic LIMIT & 0.968 & 0.982 & 0.987 & 0.992 \\
        \bottomrule
    \end{tabular}%
    }
    \caption{Multi-vector evaluation using \texttt{GTE-ModernColBERT-v1} (128 dim)}
    \label{tab:modern-colbert-vanilla}
\end{table*}
\footnotetext{\texttt{None} indicates the vanilla \texttt{GTE-ModernColBERT-v1} (128 dim) (no fine-tuning) used on testing dataset.}

\begin{table*}[!t]
    \centering
    \setlength{\tabcolsep}{4pt}
    \small
    \begin{tabular}{ll|cc|cc|cc}
        \toprule
        \multicolumn{1}{c}{\multirow{3}{*}{Attribute 1}} &
        \multicolumn{1}{c}{\multirow{3}{*}{Attribute 2}} &
        \multicolumn{2}{c|}{Single-vector} &
        \multicolumn{2}{c|}{Single-vector} &
        \multicolumn{2}{c}{Multi-vector} \\
        & & \multicolumn{2}{c|}{finetuned on LIMIT (Train)} & \multicolumn{2}{c|}{finetuned on Permuted LIMIT} & \multicolumn{2}{c}{finetuned on LIMIT (Train)} \\
        \cdashline{3-8}[0.8pt/2pt]
         & & Before & After & Before & After & Before & After  \\
        \midrule
        \textit{apple pie} & \textit{apple slices} & $39.59^\circ$ & $43.28^\circ$ & $39.59^\circ$ & $47.81^\circ$ & $14.69^\circ$ & $55.35^\circ$ \\
        \textit{black tea} & \textit{iced tea} & $33.11^\circ$ & $47.38^\circ$ & $33.11^\circ$ & $36.36^\circ$ & $16.35^\circ$ & $58.55^\circ$ \\
        \textit{feta cheese} & \textit{wrestling} & $61.61^\circ$ & $44.83^\circ$ & $61.61^\circ$ & $52.86^\circ$ & $20.22^\circ$ & $61.61^\circ$ \\
        \textit{astronomy} & \textit{the space age} & $47.01^\circ$ & $54.55^\circ$ & $47.01^\circ$ & $45.33^\circ$ & $15.01^\circ$ & $50.51^\circ$ \\
        \bottomrule
    \end{tabular}
    \caption{Angle between (select) attribute embeddings before and after fine-tuning}
    \label{tab:placeholder}
\end{table*}

\begin{table*}[!t]
    \centering
    \resizebox{0.8\textwidth}{!}{%
    \begin{tabular}{llcccc}
        \toprule
        \multicolumn{1}{c}{Training Dataset} & \multicolumn{1}{c}{Testing Dataset} & Recall@2 & Recall@10 & Recall@50 & Recall@100 \\ 
        \midrule
        None & LIMIT & 0.0075 & 0.0115 & 0.0215 & 0.0265 \\
        LIMIT (Train) & LIMIT (Test) & 0.1245 & 0.397 & 1 & 1 \\
        Split LIMIT (Train) & Split LIMIT (Test) & 0.0475 & 0.180 & 1 & 1 \\
        Synthetic LIMIT & LIMIT & 0.0685 & 0.2805 & 0.9775 & 1 \\
        Permuted LIMIT & LIMIT & 0.0665 & 0.2575 & 0.9975 & 1 \\
        Fresh LIMIT & LIMIT & 0.188 & 0.389 & 0.7385 & 0.8605 \\
        Two LIMIT & LIMIT & 0.284 & 0.5705 & 0.8555 & 0.8655 \\
        Two LIMIT & Extended LIMIT & 0.235 & 0.529 & 0.8125 & 0.8195 \\
        None & Three LIMIT & 0.2792 & 0.2667 & 0.2398 & 0.2297 \\
        Two LIMIT & Three LIMIT & 0.8457 & 0.8291 & 0.8991 & 0.9061 \\
        \bottomrule
    \end{tabular}%
    }
    \caption{Single-vector evaluation using fine-tuned \texttt{Qwen3-Embedding-0.6B} (1024 dim)}
    \label{tab:qwen1b}
\end{table*}
\begin{table*}[!t]
    \centering
    \resizebox{0.8\textwidth}{!}{%
    \begin{tabular}{llcccccc}
        \toprule
        \multicolumn{1}{c}{Training Dataset} & \multicolumn{1}{c}{Testing Dataset} & Recall@10 & Recall@100 & MRR@10 & MRR@100 & NDGC@10 & NDCG@100 \\ 
        \midrule
        None & MS MARCO & 0.3684 & 0.6314 & 0.1860 & 0.1966 & 0.2272 & 0.2827 \\
        LIMIT & MS MARCO & 0.1569 & 0.3808 & 0.0697 & 0.0779 & 0.0894 & 0.1349 \\
        Two LIMIT & MS MARCO & 0.0935 & 0.2281 & 0.0408 & 0.0457 & 0.0528 & 0.0800 \\
        \bottomrule
    \end{tabular}%
    }
    \caption{Catastrophic forgetting upon fine-tuning {\ttfamily Qwen3-Embedding-0.6B} (1024 dim) \\ on LIMIT like datasets}
    \label{tab:qwenforgetting}
\end{table*}
\begin{table*}[!t]
    \centering
    \resizebox{0.8\textwidth}{!}{%
    \begin{tabular}{llcccc}
        \toprule
        \multicolumn{1}{c}{Training Dataset} & \multicolumn{1}{c}{Testing Dataset} & Recall@2 & Recall@10 & Recall@50 & Recall@100 \\ 
        \midrule
        LIMIT & Atomic LIMIT & 0 & 0 & 0 & 0  \\
        Two LIMIT & Atomic LIMIT & 0.217 & 0.3275 & 0.448 & 0.5045 \\
        \bottomrule
    \end{tabular}%
    }
    \caption{Single-vector evaluation using fine-tuned \texttt{Qwen3-Embedding-0.6B} (1024 dim)}
    \label{tab:qwenood}
\end{table*}

\begin{table*}[!t]
    \centering
    \resizebox{0.8\textwidth}{!}{%
    \begin{tabular}{llcccc}
        \toprule
        \multicolumn{1}{c}{Training Dataset} & \multicolumn{1}{c}{Testing Dataset} & Recall@2 & Recall@10 & Recall@50 & Recall@100 \\ 
        \midrule
        LIMIT (Train) & LIMIT (Test) & 1 & 1 & 1 & 1 \\
        Two LIMIT & LIMIT & 1 & 1 & 1 & 1 \\
        LIMIT & Atomic LIMIT & 0.9990 & 0.9995 & 0.9995 & 0.9995  \\
        Two LIMIT & Atomic LIMIT & 1 & 1 & 1 & 1 \\
        %  & Three LIMIT & 0.2687 & 0.2066 & 0.1563 & 0.1407 \\
        \bottomrule
    \end{tabular}%
    }
    \caption{Multi-vector evaluation (Chamfer) using fine-tuned \texttt{GTE-ModernColBERT-v1} (128 dim)}
    \label{tab:colbertfine-tuning}
\end{table*}

\begin{table*}[!t]
    \centering
    \resizebox{0.8\textwidth}{!}{%
    \begin{tabular}{llcccccc}
        \toprule
        \multicolumn{1}{c}{Training Dataset} & \multicolumn{1}{c}{Testing Dataset} & Recall@10 & Recall@100 & MRR@10 & MRR@100 & NDGC@10 & NDCG@100 \\ 
        \midrule
        None & MS MARCO & 0.6510 & 0.8303 & 0.3801 & 0.3886 & 0.4424 & 0.4826 \\
        LIMIT & MS MARCO & 0.6176 & 0.8201 & 0.3506 & 0.3596 & 0.4122 & 0.4567 \\
        \bottomrule
    \end{tabular}%
    }
    \caption{Catastrophic forgetting upon fine-tuning {\ttfamily GTE-ModernColBERT-v1} (128 dim) on LIMIT}
    \label{tab:colbertforgetting}
\end{table*}

\noindent {\bf Fine-tuning of single-vector models.}
We evaluate the performance of single vector models after fine-tuning. We finetune the Qwen3-Embedding-0.6B model using the standard QWEN fine-tuning pipeline. For each query, we construct a small training batch by pairing it with two positive documents and thirty randomly sampled negatives, and train for one epoch. The different training datasets used in \Cref{tab:qwen1b} are described in detail in \Cref{sec:allExpts}. While the fine-tuning single-vector models on LIMIT like datasets improves performance on LIMIT like datasets, it leads to catastrophic forgetting and the performance on MS MARCO drops substantially, see \Cref{tab:qwenforgetting}. Moreover, the single-vector fine-tuning is not always robust to changes in the attributes. If we finetune on LIMIT, performance on Atomic LIMIT remains abysmal. However fine-tuning on Two LIMIT (which consists of queries containing two attributes) does improve performance on Atomic LIMIT (\Cref{tab:qwenood}).
\\

\noindent {\bf Fine-tuning of multi-vector models.} Compared to single-vector models, multi-vector models have far superior performance on LIMIT like datasets and fine-tuning multi-vector models gives perfect performance (\Cref{tab:colbertfine-tuning}). Furthermore, catastrophic forgetting in multi-vector models is almost negligible (\Cref{tab:colbertforgetting}).
\\

\noindent {\bf Drowning in documents paradox \cite{reimers-gurevych-2021-curse, drowning2025}.} Drowning in documents is a well observed phenomenon for many neural models where the quality of retrieval decreases as we increase the size of the corpus - even if the added documents are mostly irrelevant. We do a experimental and theoretical comparison of single-vector and multi-vector models with respect to drowning on LIMIT-like datasets. We experimentally observe that the distribution of scores of a given query with irrelevant documents is approximately Gaussian. Given this observation, the drowning probability for a given query, i.e. the probability that a irrelevant document will have a higher score than a relevant document, is determined by something we call the goodness metric. The goodness metric $G := \frac{\mu_+ - \mu_-}{\sigma_-}$, where $\mu_+$ is the mean of scores of relevant documents, $\mu_-$ is mean of scores of irrelevant documents and $\sigma_-$ is standard deviation of scores of irrelavant documents. The drowning probability is roughly $\textnormal{exp}(-\Theta(G^2))$. In terms of theory, we build simple toy mathematical embedding models and compare the goodness metrics of single-vector and multi-vector models. The goodness for multi-vector models is far superior than that of single-vector models (and hence the drowning probability is much lower). See \Cref{sec:goodness_single,sec:goodness_multi} for details. We corroborate this with experiments on LIMIT as well (see \Cref{fig:drowning_probability}).

% \section{Conclusion}

% We conclude with three key observations: (i) the poor performance of single-vector embedding models on LIMIT is not inherently caused by insufficient embedding dimensionality, since low-dimensional solutions can represent the sparse top-$k$ relevance structure in principle; (ii) the dominant source of failure is instead a combination of domain shift to long, list-like documents and a mismatch between cosine similarity and LIMIT's set-style notion of relevance, which finetuning on LIMIT-like data can partially improve; and (iii) these gains come with clear costs and limits, since finetuning induces substantial catastrophic forgetting on standard IR benchmarks, and even after finetuning single-vector models remain less robust than multi-vector or late-interaction methods, especially under drowning-in-documents effects as corpus size grows.

% \clearpage

\section*{Impact Statement}
This paper presents work whose goal is to advance the field of Machine Learning, specifically understanding the capabilities and limitations of single-vector embeddings for retrieval tasks. There are many potential societal consequences of our work, none which we feel must be specifically highlighted here.

\bibliography{references}
\bibliographystyle{icml2026}

% \newpage

\appendix

\section{Proofs}\label{sec:proofs}

\subsection{ \ \ Proof of Proposition~4}
\label{sec:proof-prop4}

\noindent {\bf Notation \#2.}
Let $\mathcal{C}=(\mathcal{Q},\mathcal{D})$ be a corpus with $|\mathcal{Q}|=q$ queries and $|\mathcal{D}|=n$ documents.
Let $R_k \in \{0,1\}^{q\times n}$ be the binary relevance matrix, where $(R_k)_{i,j}=1$ iff document
$D_j$ is relevant to query $Q_i$. Assume that for every query $Q_i$, there are at most $k$ relevant
documents, i.e. each row of $R_k$ has at most $k$ ones.
Define the associated sign matrix
\[
S := 2R_k - \mathbf{1}_{q\times n} \in \{\pm 1\}^{q\times n},
\]
so that $S_{i,j}=+1$ iff $(R_k)_{i,j}=1$ and $S_{i,j}=-1$ otherwise.

\begin{proposition}\label{prop:2kplus1}
There exists a real matrix $M\in\mathbb{R}^{q\times n}$ with $\operatorname{rank}(M)\le 2k+1$
such that $\operatorname{sgn}(M)=S$ entrywise. In particular,
$\operatorname{rank}_{\pm}(S)\le 2k+1$.
Consequently (by Proposition~2), there exist embeddings $u_i,v_j\in\mathbb{R}^{2k+1}$ such that
\[
\operatorname{sgn}(\langle u_i, v_j\rangle)=S_{i,j}
\]
for all $i\in\{1,\dots,q\}$, $j\in\{1,\dots,n\}$.
Hence, if a retriever ranks documents by $\langle u_i, v_j\rangle$, then for each query $Q_i$ all
relevant documents receive positive score and all irrelevant documents receive negative score, so
top-$k$ retrieval achieves perfect recall whenever each query has $\le k$ relevant documents.
\end{proposition}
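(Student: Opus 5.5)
We construct the low-rank matrix $M$ row by row from polynomials of degree $k$, evaluated on a moment curve. Assign to each document $D_j$ a distinct real number $t_j$, for example $t_j=j$. For each query $Q_i$, let $J_i=\{j:(R_k)_{i,j}=1\}$, so $|J_i|=m_i\le k$. We seek a univariate polynomial $p_i$ of degree at most $2k$ with $p_i(t_j)>0$ for $j\in J_i$ and $p_i(t_j)<0$ for $j\notin J_i$.

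The natural choice is a product of quadratics that dip below zero only near the relevant points. Fix $\varepsilon>0$ smaller than half the minimum gap between distinct $t_j$'s, and set
\[
p_i(t) \;=\; -\prod_{j\in J_i}\bigl((t-t_j)^2-\varepsilon^2\bigr),
\]
multiplied by $-1$ if $m_i$ is odd, so that the leading coefficient is negative and $p_i(t)<0$ for $t$ outside the union of the intervals $(t_j-\varepsilon,t_j+\varepsilon)$, $j\in J_i$. On those intervals exactly one factor is negative and the rest are positive, because the intervals are disjoint. Hence the product has the opposite sign there, so $p_i(t_j)>0$ for $j\in J_i$. Every other document point $t_{j'}$ lies outside all the intervals by the choice of $\varepsilon$, so $p_i(t_{j'})<0$. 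When $J_i=\emptyset$ we simply take $p_i\equiv -1$. The sign bookkeeping here (making the polynomial negative far away and positive exactly on the small windows) is the only delicate step. The cleaner way to see it is that $p_i$ has exactly the $2m_i$ roots $t_j\pm\varepsilon$ and changes sign at each, so its sign alternates across consecutive windows and gaps. One then checks the sign at a single point, e.g.\ at $t_j$ itself, where exactly one factor equals $-\varepsilon^2<0$.

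Now write $p_i(t)=\sum_{r=0}^{2k} c_{i,r}t^r$ and set $u_i=(c_{i,0},\dots,c_{i,2k})\in\mathbb{R}^{2k+1}$ and $v_j=(1,t_j,t_j^2,\dots,t_j^{2k})\in\mathbb{R}^{2k+1}$. Then $M_{i,j}=\langle u_i,v_j\rangle=p_i(t_j)$. Thus $M=UV^\top$ has rank at most $2k+1$, and $\operatorname{sgn}(M)=S$ entrywise with no zero entries, so the convention $\operatorname{sgn}(0)=-1$ never matters. This gives $\operatorname{rank}_{\pm}(S)\le 2k+1$, and the embeddings are the $u_i,v_j$ themselves, consistent with Proposition~2.

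The retrieval consequence follows directly. For each query $i$, every relevant document has a strictly positive score $\langle u_i,v_j\rangle$ and every irrelevant one a strictly negative score. Sorting by score therefore places all $\le k$ relevant documents first, so top-$k$ retrieval recovers all of them. If cosine similarity rather than inner product is required, signs are unaffected by normalizing $u_i,v_j$, so the same conclusion holds.
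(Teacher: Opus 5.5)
The construction as written fails whenever $m_i$ is odd, because of the clause ``multiplied by $-1$ if $m_i$ is odd''. Each factor $(t-t_j)^2-\varepsilon^2$ is monic. So $\prod_{j\in J_i}\bigl((t-t_j)^2-\varepsilon^2\bigr)$ is monic of even degree $2m_i$ whatever the parity of $m_i$, and $p_i=-\prod_{j\in J_i}\bigl((t-t_j)^2-\varepsilon^2\bigr)$ already has leading coefficient $-1$.

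Applying the extra flip reverses the sign pattern. For $m_i=1$ you get $p_i(t)=(t-t_j)^2-\varepsilon^2$, which is negative at the relevant point $t_j$ and positive at every other document, so $\operatorname{sgn}(M)\neq S$ in that row. Your own check at $t_j$ shows no parity correction is ever needed: one factor equals $-\varepsilon^2$, and the others are positive since $|t_j-t_{j'}|>2\varepsilon$ for $j'\neq j$, so $-\prod$ is positive.

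Delete that clause and take $p_i=-\prod_{j\in J_i}\bigl((t-t_j)^2-\varepsilon^2\bigr)$ for every $i$, with the empty product giving $p_i\equiv -1$. The rest of your argument then goes through: the direct construction of $u_i,v_j$ (which makes the appeal to Proposition~2 unnecessary), the retrieval consequence, and the cosine remark (both vectors are nonzero, so normalizing preserves signs). Also, your opening sentence says ``polynomials of degree $k$''; you mean degree at most $2k$.

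Once corrected, your proof is essentially the paper's. Documents go to the moment curve $v_j=(1,t_j,\dots,t_j^{2k})$, queries go to coefficient vectors of polynomials of degree at most $2k$ with prescribed signs at the $t_j$, and $M=UV^\top$ has rank at most $2k+1$. The only difference is where the roots go. The paper puts one root at the half-integer at each sign change of row $i$ of $S$, and there are at most $2k$ of these. That can give lower degree when relevant documents are adjacent or at an end of the ordering, and it applies to any row with at most $2k$ sign changes. Your pair of roots $t_j\pm\varepsilon$ around each relevant document instead makes the sign verification local to each relevant point, independent of how the documents are ordered.
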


\begin{proof}
We follow the construction implicit in \cite{amy16}. Fix a row index $i\in\{1,\dots,q\}$ and consider
the sign pattern $(S_{i,1},\dots,S_{i,n})\in\{\pm 1\}^n$.
Since row $i$ of $R_k$ has at most $k$ ones, row $i$ of $S$ has at most $k$ entries equal to $+1$.
Therefore, row $i$ contains at most $k$ contiguous blocks of $+1$'s, and thus the number of sign
changes in the sequence $(S_{i,1},\dots,S_{i,n})$ is at most $2k$.

Let $t_{i,1} < t_{i,2} < \cdots < t_{i,s_i}$ be the indices in $\{1,\dots,n-1\}$ where a sign change
occurs, i.e. $S_{i,t_{i,\ell}}\neq S_{i,t_{i,\ell}+1}$, and note that $s_i\le 2k$.
Define half-integers
\[
r_{i,\ell} := t_{i,\ell}+\tfrac12 \qquad (\ell=1,\dots,s_i),
\]
and define the polynomial
\[
G_i(x) := S_{i,1}\cdot \prod_{\ell=1}^{s_i} (r_{i,\ell}-x).
\]
Then $\deg(G_i)=s_i\le 2k$, and $G_i(j)\neq 0$ for every integer $j\in\{1,\dots,n\}$ because all roots
are half-integers. Moreover, for integer $j$, the sign of the factor $(r_{i,\ell}-j)$ flips exactly when
$j$ passes from $t_{i,\ell}$ to $t_{i,\ell}+1$, so $\operatorname{sgn}(G_i(j))$ flips exactly at the
sign-change positions of the row. Since $G_i(1)$ has sign $S_{i,1}$ (all factors are positive at $x=1$),
we obtain
\[
\operatorname{sgn}(G_i(j)) = S_{i,j}\qquad\text{for all } j\in\{1,\dots,n\}.
\]

Now define a real matrix $M\in\mathbb{R}^{q\times n}$ by $M_{i,j}:=G_i(j)$. Then
$\operatorname{sgn}(M)=S$ entrywise. Also, since each $G_i$ has degree at most $2k$, we can write
\[
G_i(x)=\sum_{m=0}^{2k} a_{i,m}x^m,
\]
padding with zeros if $\deg(G_i)<2k$. Let $A\in\mathbb{R}^{q\times(2k+1)}$ be defined by
$A_{i,m+1}=a_{i,m}$ and let $V\in\mathbb{R}^{n\times(2k+1)}$ be the Vandermonde-type matrix
$V_{j,m+1}=j^m$. Then $M=AV^\top$, and hence $\operatorname{rank}(M)\le 2k+1$.
Therefore $\operatorname{rank}_{\pm}(S)\le 2k+1$ by definition of sign rank.

Finally, Proposition~2 converts $\operatorname{rank}_{\pm}(S)\le 2k+1$ into the existence of
embeddings $u_i,v_j\in\mathbb{R}^{2k+1}$ with $\operatorname{sgn}(\langle u_i,v_j\rangle)=S_{i,j}$.
This implies that every relevant pair has positive inner product and every irrelevant pair has negative
inner product; since each query has at most $k$ relevant documents, top-$k$ retrieval returns all
relevant documents (perfect recall).
\end{proof}

\subsection{ \ \ Proof of Proposition~5}
\label{sec:proof-prop5}

\begin{proof}
We view each query/document as a subset of the universe of tokens $U$.
First observe that $d=|U|$ dimensions trivially suffice: assign to each token $u\in U$
the canonical unit vector $e_u\in\mathbb{R}^{|U|}$, and for any set $X\subseteq U$
define $\tilde{E}(X):=\sum_{u\in X} e_u$.
Then for any $Q,D$ we have $\langle \tilde{E}(Q),\tilde{E}(D)\rangle = |Q\cap D|$,
so for a fixed query $Q$ the ordering of documents by inner product agrees exactly
with the ordering by $|Q\cap D|$, equivalently by $\relset{Q}{D}=|Q\cap D|/|Q|$.

We now reduce the dimension using a Johnson--Lindenstrauss type projection.
Let $d$ be a parameter to be chosen later. By the Johnson--Lindenstrauss lemma
(equivalently, by standard concentration for random vectors), there exist unit vectors
$\hat{e}_1,\dots,\hat{e}_{|U|}\in\mathbb{R}^d$ such that for all $i\neq j$,
\begin{equation}\label{eq:almost-orth}
|\langle \hat{e}_i,\hat{e}_j\rangle| \;\le\; \varepsilon,
\qquad\text{where }\ \varepsilon = O\!\left(\sqrt{\frac{\log|U|}{d}}\right),
\end{equation}
and $\langle \hat{e}_i,\hat{e}_i\rangle = 1$.

For any $X\subseteq U$, define the (unnormalized) sum embedding
$\tilde{E}(X):=\sum_{u\in X}\hat{e}_u$ and the normalized embedding
$E(X):=\tilde{E}(X)/\|\tilde{E}(X)\|_2$.

Fix a query $Q$ with $|Q|\le \ell_Q$ and a document $D$ with $|D|\le \ell_D$.
Write $i:=|Q\cap D|$, $q:=|Q|$, and $m:=|D|$.
Expanding the dot product,
\[
\langle \tilde{E}(Q),\tilde{E}(D)\rangle
= \sum_{a\in Q}\sum_{b\in D}\langle \hat{e}_a,\hat{e}_b\rangle
= i \;+\!\!\sum_{\substack{a\in Q,\,b\in D\\ a\neq b}}\!\!\langle \hat{e}_a,\hat{e}_b\rangle,
\]
so by \eqref{eq:almost-orth},
\begin{equation}\label{eq:num}
\Big|\langle \tilde{E}(Q),\tilde{E}(D)\rangle - i\Big|
\;\le\; \varepsilon\,(qm-i) \;\le\; \varepsilon\,qm \;\le\; \varepsilon\,\ell_Q\ell_D.
\end{equation}
Similarly,
\[
\|\tilde{E}(Q)\|_2^2
= \sum_{a\in Q}\sum_{b\in Q}\langle \hat{e}_a,\hat{e}_b\rangle
= q \;+\!\!\sum_{\substack{a,b\in Q\\ a\neq b}}\!\!\langle \hat{e}_a,\hat{e}_b\rangle,
\]
hence $|\|\tilde{E}(Q)\|_2^2 - q|\le \varepsilon q(q-1)\le \varepsilon \ell_Q^2$,
and likewise $|\|\tilde{E}(D)\|_2^2 - m|\le \varepsilon \ell_D^2$.
Therefore (for $\varepsilon \ell_Q,\varepsilon \ell_D$ sufficiently small),
\begin{align*}
\|\tilde{E}(Q)\|_2 &= \sqrt{q}\,(1\pm O(\varepsilon \ell_Q)), \\
\|\tilde{E}(D)\|_2 &= \sqrt{m}\,(1\pm O(\varepsilon \ell_D)).
\end{align*}
Combining this with \eqref{eq:num} yields
\begin{align}\label{eq:dot-approx}
\langle E(Q),E(D)\rangle
&= \frac{\langle \tilde{E}(Q),\tilde{E}(D)\rangle}{\|\tilde{E}(Q)\|_2\,\|\tilde{E}(D)\|_2} \nonumber \\
&= \frac{i}{\sqrt{qm}}
 \pm O\!\left(\varepsilon\,\sqrt{qm}\right) \nonumber \\
&\subseteq \frac{i}{\sqrt{qm}}
 \pm O\!\left(\varepsilon\,\sqrt{\ell_Q\ell_D}\right).
\end{align}
This matches the bound stated in the appendix of the earlier version. 

Choose $d$ so that $\varepsilon \le \frac{1}{c\,\ell_Q\ell_D}$ for a sufficiently large absolute constant $c$.
Since $\varepsilon = O\!\big(\sqrt{\log|U|/d}\big)$, it suffices to take
\[
d = \Omega\!\big(\ell_Q^2\,\ell_D^2\,\log|U|\big).
\]
With this choice, the additive error term in \eqref{eq:dot-approx} is at most
$\frac{1}{4}\cdot \frac{1}{\sqrt{\ell_Q\ell_D}}$, and hence for any fixed query $Q$,
documents are ordered by $\langle E(Q),E(D)\rangle$ in the same way as by the main term
$\frac{|Q\cap D|}{\sqrt{|Q||D|}}$.

Finally, to match the paper's set-theoretic relevance $\relset{Q}{D}=\frac{|Q\cap D|}{|Q|}$ exactly
for variable document lengths, we can rescale documents by their length:
define $E'(D):=\sqrt{|D|}\,E(D)$ (equivalently, do not normalize the document sum).
Then \eqref{eq:dot-approx} implies
\[
\langle E(Q),E'(D)\rangle
= \frac{|Q\cap D|}{\sqrt{|Q|}}
\ \pm\ O\!\left(\varepsilon\,|D|\,\sqrt{|Q|}\right),
\]
so for fixed $Q$ this is strictly monotone in $|Q\cap D|$ (hence in $\relset{Q}{D}$) once $d$
satisfies the same asymptotic lower bound above. Therefore, for each query $Q$ there exists a
threshold $\tau(Q)$ such that the induced ordering (or thresholding) agrees with the relevance rule
$\distset{Q}{D}\le \tau(Q)$, yielding the claimed sufficiency.

Taking the better of the two constructions ($d=|U|$ or the JL-based bound) gives
\[
d = O\!\left(\min\big\{|U|,\ \ell_Q^2\ell_D^2\log|U|\big\}\right),
\]
as required.
\end{proof}

\subsection{A simple model for Single-vector embeddings}\label{sec:goodness_single}

Let $\{v_a\}_{a=1}^M \subset \mathbb{R}^D$ be i.i.d. random unit vectors distributed uniformly on the unit sphere $\mathbb{S}^{D-1}$.
A query is a set $Q \subseteq [M]$ with $|Q|=m$ and a document is a set $S \subseteq [M]$ with $|S|=n$.
Define the (unnormalized) sum embeddings
\begin{equation}
\tilde q \;=\; \sum_{a\in Q} v_a,
\qquad
\tilde d \;=\; \sum_{b\in S} v_b,
\end{equation}
and the normalized single-vector embeddings
\begin{equation}
q \;=\; \frac{\tilde q}{\|\tilde q\|},
\qquad
d \;=\; \frac{\tilde d}{\|\tilde d\|}.
\end{equation}
We score query--document pairs by cosine similarity
\begin{equation}
s(Q,S) \;:=\; q^\top d \in [-1,1].
\end{equation}

We call $(Q,S)$ \emph{negative} if $Q\cap S=\varnothing$, and \emph{positive} if $|Q\cap S|=K$ for a fixed $K\ge 1$.

\subsubsection{Negative case: exact mean and variance}

\begin{lemma}[Negative cosine statistics]
\label{lem:neg_stats}
If $Q\cap S=\varnothing$, then $q$ and $d$ are independent and uniformly distributed on $\mathbb{S}^{D-1}$, and
\begin{equation}
\mathbb{E}[s(Q,S)] \;=\; 0,
\qquad
\mathrm{Var}(s(Q,S)) \;=\; \frac{1}{D}.
\end{equation}
\end{lemma}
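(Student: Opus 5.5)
Since $Q\cap S=\varnothing$, the unnormalized embeddings $\tilde q=\sum_{a\in Q}v_a$ and $\tilde d=\sum_{b\in S}v_b$ are functions of disjoint families of the i.i.d.\ vectors $\{v_a\}$, so they are independent. The same holds for $q=\tilde q/\|\tilde q\|$ and $d=\tilde d/\|\tilde d\|$. This normalization is defined almost surely: for $m\ge 1$ the sum of $m$ independent uniform unit vectors is zero only on a null set when $D\ge 2$. (If $D=1$ and $m$ is even, the sum can vanish with positive probability; I will note this degenerate case separately or assume $D\ge 2$.)

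Next I will show that $q$ is uniform on $\mathbb{S}^{D-1}$ by rotation invariance. For any orthogonal $O$, the family $\{Ov_a\}$ has the same joint law as $\{v_a\}$. Hence $O\tilde q=\sum_{a\in Q}Ov_a$ has the same law as $\tilde q$, and so $Oq=O\tilde q/\|O\tilde q\|$ has the same law as $q$. The uniform measure is the unique rotation-invariant probability measure on the sphere, so $q$ is uniform. The same argument applies to $d$.

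For the moments, condition on $d$. Given $d$, the score $s=q^\top d$ is the projection of a uniform unit vector onto a fixed unit vector. By symmetry ($q\mapsto -q$) it has conditional mean $0$. For the second moment, $\mathbb{E}[qq^\top]=\frac1D I$: this matrix is rotation-invariant, so it is a multiple of $I$, and its trace is $\mathbb{E}\|q\|^2=1$. Therefore $\mathbb{E}[s^2\mid d]=d^\top \mathbb{E}[qq^\top] d=\frac1D$. Taking the outer expectation gives $\mathbb{E}[s]=0$ and $\mathrm{Var}(s)=\frac1D$.

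The only delicate point is justifying that the normalized sum is exactly uniform. The invariance-plus-uniqueness argument above handles this cleanly, so I do not expect a real obstacle beyond stating the almost-sure nonvanishing of $\tilde q$.
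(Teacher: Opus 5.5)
Your proof is correct and follows essentially the same route as the paper's. Both use independence from the disjoint attribute families, rotation invariance to get uniformity on the sphere, sign symmetry for the mean, and isotropy for the second moment; you phrase the last step as $\mathbb{E}[qq^\top]=\frac1D I$ after conditioning on $d$, whereas the paper conditions on $q$, rotates it to $e_1$, and uses coordinate symmetry of $d$. Your remark that $\tilde q\neq 0$ almost surely (with the $D=1$, even-$m$ caveat) is a point of rigor the paper leaves implicit.
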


\begin{proof}
Since $Q\cap S=\varnothing$ and the attribute vectors $\{v_a\}$ are independent, $\tilde q$ and $\tilde d$ are independent.
Moreover, $\tilde q$ is a sum of i.i.d. rotationally invariant random vectors; hence $\tilde q$ is rotationally invariant, and so is its direction
$q=\tilde q/\|\tilde q\|$. Therefore $q \sim \mathrm{Unif}(\mathbb{S}^{D-1})$, and similarly $d \sim \mathrm{Unif}(\mathbb{S}^{D-1})$.
Independence of $\tilde q$ and $\tilde d$ implies independence of $q$ and $d$.

For the mean, by symmetry $d \stackrel{d}{=} -d$, hence $q^\top d \stackrel{d}{=} -(q^\top d)$ and $\mathbb{E}[q^\top d]=0$.

For the variance, condition on $q$ and apply a rotation sending $q$ to $e_1$. Rotational invariance of $d$ implies
$q^\top d \stackrel{d}{=} e_1^\top d = d_1$. Since $\|d\|^2=\sum_{k=1}^D d_k^2 = 1$ and by symmetry
$\mathbb{E}[d_1^2]=\cdots=\mathbb{E}[d_D^2]$, we have
$D\,\mathbb{E}[d_1^2]=\mathbb{E}\big[\sum_{k=1}^D d_k^2\big]=1$, so $\mathbb{E}[d_1^2]=1/D$.
Thus $\mathrm{Var}(q^\top d)=\mathbb{E}[(q^\top d)^2]=1/D$.
\end{proof}

\subsubsection{Goodness metric}

We define the goodness metric as
\begin{align}
G &:= \frac{\mu_{+}-\mu_{-}}{\sigma_{-}}, \nonumber \\
\mu_{\pm} &:= \mathbb{E}[s(Q,S)\mid \text{positive/negative}], \nonumber \\
\sigma_{-}^2 &:= \mathrm{Var}(s(Q,S)\mid \text{negative}).
\end{align}
By Lemma~\ref{lem:neg_stats}, $\mu_- = 0$ and $\sigma_- = 1/\sqrt{D}$, hence
\begin{equation}
G \;=\; \sqrt{D}\,\mu_{+}.
\label{eq:G_simplified}
\end{equation}

\subsubsection{Positive case with fixed overlap $K$: decomposition and approximation}

Assume $|Q\cap S|=K$. Write the shared and non-shared parts as
\begin{equation}
\tilde q = U + X,
\qquad
\tilde d = U + Y,
\end{equation}
where $U=\sum_{i=1}^K u_i$ is the sum of the $K$ shared attribute vectors, $X$ is the sum of the $m-K$ query-only vectors, and $Y$ is the sum of the
$n-K$ document-only vectors. Under the model, the vectors $\{u_i\}$, those in $X$, and those in $Y$ are i.i.d. uniform on $\mathbb{S}^{D-1}$ and mutually independent across groups.

Expanding the unnormalized dot product gives
\begin{equation}
\tilde q^\top \tilde d = (U+X)^\top(U+Y) = U^\top U + U^\top Y + X^\top U + X^\top Y.
\end{equation}
By independence and symmetry of cross terms,
\begin{align}
\mathbb{E}[\tilde q^\top \tilde d] &= \mathbb{E}[U^\top U] = K, \nonumber \\
\mathbb{E}\|\tilde q\|^2 &= m, \quad
\mathbb{E}\|\tilde d\|^2 = n.
\label{eq:pos_moments}
\end{align}
Since $s(Q,S) = \dfrac{\tilde q^\top \tilde d}{\|\tilde q\|\,\|\tilde d\|}$, a standard high-dimensional concentration approximation replaces
$\|\tilde q\|$ and $\|\tilde d\|$ by their typical scales $\sqrt{m}$ and $\sqrt{n}$, yielding
\begin{equation}
\mu_{+} \;=\; \mathbb{E}[s(Q,S)\mid |Q\cap S|=K]
\;\approx\; \frac{K}{\sqrt{mn}}.
\label{eq:mu_pos_approx}
\end{equation}
Combining~\eqref{eq:G_simplified} and~\eqref{eq:mu_pos_approx} gives the corresponding approximation for the goodness metric:
\begin{equation}
G \;\approx\; \sqrt{D}\,\frac{K}{\sqrt{mn}}
\;=\; K\sqrt{\frac{D}{mn}}.
\label{eq:G_final}
\end{equation}

\subsection{A simple model for Multi-vector embeddings}\label{sec:goodness_multi}

Let $\{v_a\}_{a=1}^M \subset \mathbb{R}^D$ be i.i.d. random unit vectors distributed uniformly on the unit sphere $\mathbb{S}^{D-1}$.
A query is a set $Q \subseteq [M]$ with $|Q|=m$ and a document is a set $S \subseteq [M]$ with $|S|=n$. We will assume that the parameters satisfy $n << 2^D$, which is usually the case in practice.
Define the multi-vector embeddings
\begin{equation}
q \;=\; \left( v_a \right)_{a \in Q},
\qquad
d \;=\; \left( v_b \right)_{b \in S}.
\end{equation}
We score query - document pairs by Chamfer similarity
\begin{equation}
\text{ch}(Q,S) \;:=\; \frac{1}{m} \sum_{a \in Q} \max_{b \in S} \langle v_a, v_b \rangle.
\end{equation}

As before, we call $(Q,S)$ \emph{negative} if $Q\cap S=\varnothing$, and \emph{positive} if $|Q\cap S|=K$ for a fixed $K\ge 1$. For simplicity, we do approximate calculations and do not give proofs. These are standard calculations in the probability literature and can be turned into proofs.

\subsubsection{Negative case: mean and variance}

If $Q\cap S=\varnothing$, then
$\mathbb{E}[\textnormal{ch}(Q,S)] \sim \Theta\left( \sqrt{\frac{\log(n)}{D}}\right)$,
and
$\mathrm{Var}(\textnormal{ch}(Q,S)) \sim \Theta\left(\frac{1}{D m \log(n)}\right)$. This is because, for random unit vectors $v_a, v_b$, $\langle v_a, v_b \rangle \sim \mathcal{N}(0, 1/D)$. Now if you $\max_{b \in S} \langle v_a, v_b \rangle$ is a max of $n$ Gaussians which has a Gumbel distribution with mean $\Theta\left(\sqrt{\frac{\log(n)}{D}}\right)$ and variance $\Theta\left(\frac{1}{\log(n) D}\right)$. Further when we consider, $\frac{1}{m} \sum_{a \in Q} \max_{b \in S} \langle v_a, v_b \rangle$, this is mean of random variables with Gumbel distributions. However the random variables are not independent. If they were independent, one would have variance exactly $\Theta\left(\frac{1}{\log(n) m D}\right)$ but note that there is still some independence arising from the random query token vectors. So we get mean exactly $\Theta\left(\sqrt{\frac{\log(n)}{D}}\right)$ and variance approximately $\Theta\left(\frac{1}{\log(n) m D}\right)$.

\subsubsection{Positive case with fixed overlap $K$}

Assume $|Q\cap S|=K$. In this case, the query tokens, which are matched will contribute $1$ and others contribute negligible. So the Chamfer score in this will be atleast $K/m$ with high probability.

\subsubsection{Goodness metric}

We can see that the goodness metric in the multi-vector case is $\sim \frac{K/m}{\Theta\left(\sqrt{\frac{1}{\log(n) m D}}\right)} = \Theta\left( K \sqrt{\frac{D}{m \log(n)}}\right)$. 

So we can see that for multi-vector embeddings we require a much smaller dimension to achieve a decent goodness compared to single-vector embeddings.

\section{Experiments}\label{sec:allExpts}

\subsection{LIMIT Dataset}
\label{sec:limit}

The \textsc{LIMIT} dataset introduced by \cite{weller2025} represents each document as a short synthetic person profile: a list of discrete ``like'' attributes (items a person likes). Queries are simple natural-language prompts of the form ``who likes $x$,'' where $x$ is a single attribute drawn from a fixed vocabulary of $|\mathcal{V}|=1848$ unique attributes. The construction enforces two realism constraints: (i) each profile contains fewer than 50 attributes, and (ii) each query references exactly one attribute.
Each query has exactly $k=2$ relevant documents. To maximize combinatorial overlap among relevance judgments, \cite{weller2025} use a \emph{dense qrel pattern} in which each query corresponds to an unordered pair of documents drawn from a small core set. Concretely, they select $N=46$ core documents so that the number of unordered pairs, $\binom{46}{2}=1035$, slightly exceeds 1000, and then instantiate 1000 single-attribute queries whose relevance sets follow this pairwise pattern. The full release contains 50{,}000 documents and 1000 such single-attribute queries; documents are assigned random first and last names and padded with additional attributes so that all documents have equal length. A small variant retains only the 46 core documents that participate in at least one of the 1000 dense-pattern queries.
We follow the \textsc{LIMIT} construction and treat the 1000 dense-pattern queries as a held-out test set. For training, we use an additional set of 848 single-attribute queries drawn from the same vocabulary and generated with the same procedure, together with the full 50{,}000-document pool.

 \subsection{Split LIMIT}
We define \emph{Split LIMIT} by retaining the full 50{,}000 document corpus from LIMIT and splitting the 1000 LIMIT queries into 800 for training and 200 for testing. The split is random but fixed with a published seed for exact reproducibility. Each query continues to have exactly \(k=2\) relevant documents.

\subsection{Atomic LIMIT}

We define \emph{Atomic LIMIT} as a variant of LIMIT in which each attribute is replaced with a noun-based variant that exists as a single token in the tokenizer’s vocabulary. The replacement is performed deterministically, thereby preserving the ordering of the original attributes. Each query continues to have exactly \(k=2\) relevant documents. \emph{Atomic LIMIT} allows us to study the effect of tokenization in detail.

\subsection{Permuted LIMIT}

 We introduce \emph{Permuted LIMIT}, a training variant that keeps the query set identical to the 1000 \textsc{LIMIT} test queries, but applies a controlled remapping to the \emph{document} attributes. Let $\pi$ be a permutation over the subset of attributes that appear in the 1000 queries. For each document, we replace every occurrence of a ``seen'' attribute $a$ by $\pi(a)$, leaving attributes that never appear in the 1000 queries unchanged. We additionally generate two extra within-document permutations to reduce positional cues; for example, \texttt{A B C D} yields \texttt{$\pi$(A) $\pi$(B) C D}, \texttt{C $\pi$(B) $\pi$(A) D}, and \texttt{C D $\pi$(A) $\pi$(B)}.
% We introduce \emph{Permuted LIMIT}, a training variant that keeps the query set identical to the 1000 LIMIT test queries and constructs a shifted corpus of 50k documents by remapping only the attributes that appear in those 1000 queries. Let \(\pi\) be a mapping on seen attributes with \(\pi(\texttt{A})=\texttt{P}\) and \(\pi(\texttt{B})=\texttt{Q}\), and introduce fresh attributes \(\texttt{X}\) and \(\texttt{Y}\) with \(\pi(\texttt{P})=\texttt{X}\) and \(\pi(\texttt{Q})=\texttt{Y}\); unseen attributes such as \texttt{C}, \texttt{D}, \texttt{R}, \texttt{S} are left unchanged because they never appear in the 1000 queries. For each original document, we replace seen tokens by their images under \(\pi\) and append two additional permutations to reduce position bias, for example \texttt{A B C D} becomes \texttt{P Q C D}, \texttt{C Q P D}, \texttt{C D P Q}, and \texttt{P Q R S} becomes \texttt{X Y R S}, \texttt{X Y S R}, \texttt{R S Y X}. 

\subsection{Fresh LIMIT}

We introduce \emph{Fresh LIMIT}, a training variant that keeps the query set identical to the 1000 LIMIT test queries and constructs a new corpus of 50k documents by using a independently sampled dense qrel pattern. Unlike Permuted LIMIT, which remaps attributes through permutation, \emph{Fresh LIMIT} regenerates the entire mapping based on a newly sampled combination of relevant documents. Specifically, we construct a fresh set of \(N'=46\) documents and assign relevance pairs preserving the 2-sparsity.

\subsection{Extended LIMIT}

We introduce \emph{Extended LIMIT}, a larger corpus variant that retains the first 50k documents of the original LIMIT dataset and augments it with new documents which are generated by sampling from the same attribute distribution as LIMIT while ensuring that none of the new entries appear in the existing qrels.

\subsection{Two LIMIT}
We define \emph{Two LIMIT} by keeping the original 50{,}000 document pool from LIMIT unchanged  and replacing single-attribute queries with two-attribute conjunctions. Concretely, let \(\mathcal{V}_{2}\) be the set of 1{,}848 admissible attributes designated for pairwise querying. We enumerate all unordered pairs of distinct attributes from \(\mathcal{V}_{2}\), yielding \(\binom{1848}{2}\) queries of the form “who likes \(x\) and \(y\)”. A document is relevant to a query \((x,y)\) if at least one of the attributes $x$ or $y$ appears in that document.

\end{document}